\def \R{\mathbb{R}}
\def \N{\mathbb{N}}
\def \Q{\mathbb{Q}}
\newtheorem{theorem}{Theorem}[section]
\begin{document}

\title[The Capacity of a Discrete Noisy Channel]{\textbf{On the computation of the Shannon capacity of a discrete channel with noise}}
\author{\textbf{Simon Robin Cowell}}
\thanks{This research was funded by the European Union through the European Regional Development Fund under the Competitiveness Operational Program (\emph{BioCell-NanoART} = \emph{Novel Bio-inspired Cellular Nano-architectures}, POC-A1-A1.1.4-E nr. 30/2016).}
\address{College of Science, Department of Mathematical Sciences\\ 
United Arab Emirates University\\ 
Al Ain, P.O.Box 15551, United Arab Emirates\\
E-mail: scowell@uaeu.ac.ae}
\date{}
\begin{abstract}
Muroga \cite{Muroga} showed how to express the Shannon channel capacity of a discrete channel with noise \cite{Shannon} as an explicit function of the transition probabilities. His method accommodates channels with any finite number of input symbols, any finite number of output symbols and any transition probability matrix. Silverman \cite{Silverman} carried out Muroga's method in the special case of a binary channel (and went on to analyse ``cascades" of several such binary channels).

This article is a note on the resulting formula for the capacity $C(a,c)$ of a single binary channel. We aim to clarify some of the arguments and correct a small error. In service of this aim, we first formulate several of Shannon's definitions and proofs in terms of discrete measure-theoretic probability theory. We provide an alternate proof to Silverman's, of the feasibility of the optimal input distribution for a binary channel. For convenience, we also express $C(a,c)$ in a single expression explicitly dependent on $a$ and $c$ only, which Silverman stopped short of doing.
\end{abstract}
\keywords{Information Theory, Shannon Capacity, Mutual Information, Shannon Entropy}
\maketitle

\section{Introduction}

We recommend the beautifully written \cite{Shannon} to the reader wanting to understand the information theory discussed in the present paper. We begin by recalling a few definitions and theorems from that book.

In section 6 of chapter I of \cite{Shannon}, Shannon represents a discrete source of information by a discrete random variable $X$, taking values in $\{x_1, x_2, \ldots, x_n\}$ with probabilities $\{p_1, p_2, \ldots, p_n\}$, respectively. He proposes to find a way to measure the amount of choice, uncertainty, information or ``entropy" involved in a single sampling of $X$. This measurement should be a function $H(p_1, p_2, \ldots, p_n)$ of the distribution of $X$, and, Shannon suggests, should obey a certain set of 3 axioms. We begin by reformulating his definitions and axioms in terms of modern i.e. measure-theoretic probability theory (albeit that we use only the discrete measure).

From here on, unless stated otherwise, all probability spaces will be assumed to be equipped with their discrete $\sigma$-algebra. For example, in the probability space $(\Omega, \Sigma, P)$, $\Sigma$ will simply be $\mathcal{P}(\Omega)$, the power set of $\Omega$. Therefore we will supress the notation $\Sigma$, and in place of $(\Omega, \Sigma, P)$ we will simply write $(\Omega, P)$.

We want $H$ to be a function associating with every finite, discrete probability space $(\Omega, P)$ a non-negative real number $H(\Omega, P)$, and we want $H$ to obey certain axioms. We will state those axoims, and then, closely following Shannon's proof sketch, we will prove that such an $H$ exists, and is unique, up to a multiplicative constant. One of our axioms will be that $H$ must be invariant with respect to probability-preserving bijections between the possible outcomes (i.e. those having strictly positive probability) of finite, discrete probability spaces. In other words, $H$ will depend only on the multiset of probabilities of all outcomes in the space having positive probability. The reader might ask why then, we don't describe $H$ only as a function of $n \in \N$ and $\vec{P} \in \Delta^{n-1}$ which is invariant with respect to permutations of the coordinates of $\vec{P}$, without introducing the redundant measure space ? The answer is just that we find the measure space convenient in reformulating Shannon's 3rd axiom precisely (as our 4th axiom), see below.
We also find the 4th axiom below to be convenient as it yields immediately the fact that $H(X,Y)=H(X)+H(Y|X)$, see section 2, below.

Let $(\Omega, P)$ be a finite, discrete probability space with $|\Omega| = n \in \N$ and label the outcomes, that is, the elements of $\Omega$, as $\Omega = \{\omega_1, \omega_2, \ldots, \omega_n\}$, so that the vector of probabilities $\vec{P} = (P(\{\omega_1\}), P(\{\omega_2\}), \ldots, P(\{\omega_n\}))$ takes values in $\Delta^{n-1}$, the standard $(n-1)$-simplex.

Let $\Omega_+$ denote the set $\{\omega \in \Omega : P(\{\omega\})>0\}$ of outcomes having strictly positive probability.

Whenever we have a partition $\mathcal{E}$ of $\Omega$, let us define a probability measure $P_{\mathcal{E}}$ on the discrete $\sigma$-algebra on $\mathcal{E}$ by
\begin{equation}
P_{\mathcal{E}}(\{E_i \in \mathcal{E} : i \in I\}) = P \left ( \bigsqcup_{i \in I} E_i \right ) = \sum_{i \in I} P(E_i),
\end{equation}
where $I$ is any index set, and $\sqcup$ denotes disjoint union. Thus $(\mathcal{E}, P_{\mathcal{E}})$ is a discrete probability space.
Also, for each $E \in \mathcal{E}$ such that $P(E) > 0$, and for each subset $F \subseteq E$, let us denote by $Q_E(F)$ the conditional probability
\begin{equation}
Q_E(F) = P(F|E),
\end{equation}
so that $(E, Q_E)$ is a discrete probability space.

We are now ready to reformulate Shannon's axioms in these terms:
\begin{enumerate}
\item
Whenever $(\Omega, P)$ and $(\Omega', P')$ are finite, discrete probability spaces and there is a bijection $f: \Omega_+ \to \Omega_+'$ such that $P'(\{f(\omega)\}) = P({\omega})$ for all $\omega \in \Omega_+$, we must have that $H(\Omega, P) = H(\Omega', P')$.
\item
$H(\Omega, P)$ must be continuous in the probability vector $\vec{P} \in \Delta^{n-1}$ with respect to the topology which $\Delta^{n-1}$ inherits from $\R^n$.
\item
For all $n \in \N$, for any discrete probability spaces $(\Omega, P)$ and $(\Omega', P')$ such that $|\Omega| = n$, $|\Omega'| = n+1$, $P(\{\omega\}) = \frac{1}{n}$ for all $\omega \in \Omega$ and $P'(\{\omega'\}) = \frac{1}{n+1}$ for all $\omega' \in \Omega'$, we must have
\begin{equation}
H(\Omega, P) < H(\Omega', P').
\end{equation}
\item
For any partition $\mathcal{E}$ of $\Omega$, we must have that
\begin{equation}
\label{eq: additivity axiom}
H(\Omega, P) = H(\mathcal{E}, P_{\mathcal{E}}) + \sum_{\mathclap{\substack{E \in \mathcal{E} \\ P(E)>0}}} P(E) H(E, Q_E).
\end{equation}
\end{enumerate}
We claim that the first and second axioms are natural.
Note that it follows from the first axiom that, if it is convenient for the computation of $H$, we may delete from $\Omega$ any outcomes $\omega$ having zero probability. Such outcomes exist precisely when $n \geq 2$ and $\vec{P}$ belongs to the boundary of $\Delta^{n-1}$. Deleting $k$ outcomes having zero probability in effect replaces $\Delta^{n-1}$ by a copy of $\Delta^{n-k-1}$ which is isomorphic to the part of the boundary of $\Delta^{n-1}$ in question. In the extreme case $k = n-1$, we reduce $\Delta^{n-1}$ to a copy of $\Delta^0$, that is, the singleton set $\{1\}$.
The idea of the third axiom is that, if all outcomes are equally likely, then the amount of choice, or uncertainty, should be greater, when there are more possible outcomes.
In Shannon's words, the idea of the fourth axiom is that ``If a choice be broken down into two successive choices, the original $H$ should be the weighted sum of the individual values of $H$." Shannon illustrates his meaning with an example (fig 6 of section 6 of chapter 1 in \cite{Shannon}), of which he writes:
``At the left we have three possibilities with probabilities $p_1=\frac{1}{2}$, $p_2 = \frac{1}{3}$ and $p_3 = \frac{1}{6}$. On the right we first choose between two possibilities, each with probability $\frac{1}{2}$, and if the second occurs make another choice between two possibilities with probabilities $\frac{2}{3}$ and $\frac{1}{3}$. The final results have the same probabilities as before. We require, in this special case, that
\[
H \left ( \tfrac{1}{2}, \tfrac{1}{3}, \tfrac{1}{6} \right ) = H \left ( \tfrac{1}{2}, \tfrac{1}{2} \right ) + \tfrac{1}{2} H \left ( \tfrac{2}{3}, \tfrac{1}{3} \right ).
\]
The coefficient $\frac{1}{2}$ on the right-hand side is because this second choice only occurs half the time."

Translating this example to our terminology, the partition $\mathcal{E}$ has two parts, $E_1 = \{x_1\}$ and $E_2 = \{x_2, x_3\}$.
Our equation \eqref{eq: additivity axiom} becomes
\begin{align}
\notag
H(\{x_1, x_2, x_3\}, (\tfrac{1}{2}, \tfrac{1}{3}, \tfrac{1}{6}))
&= H(\{E_1, E_2\}, (\tfrac{1}{2},\tfrac{1}{2})) \\
&\phantom{potatoes} + \left [ P(E_1) H(E_1, (1)) + P(E_2) H(E_2, (\tfrac{2}{3}, \tfrac{1}{3})) \right ] \\
&= H(\{E_1, E_2\}, (\tfrac{1}{2},\tfrac{1}{2})) + \left [ \tfrac{1}{2} H(E_1, (1)) + \tfrac{1}{2} H(E_2, (\tfrac{2}{3}, \tfrac{1}{3})) \right ],
\end{align}
where we represent the various probability functions $P$ by their corresponding probability vectors $\vec{P}$.
Later we will see that, in this example, the first term in the square brackets vanishes, because the entropy of the certain event is zero; no information is contained in an experiment whose outcome is known in advance.
We have introduced the condition $P(E)>0$ in axiom 4. One effect of this will be to introduce a similar condition in the formula for $H$ in Theorem \ref{thm: Existence and Uniqueness of Entropy}. This allows our probability vector $\vec{P}$ to remain in the standard simplex $\Delta^{n-1}$, whereas Shannon instead must delete any events with zero probability, reducing to a lesser $n$ and replacing the standard simplex $\Delta^{n-1}$ by a simplex of lesser dimension. Our solution is less elegant than Shannon's, but we like its comparative precision.

For simplicity, and without risk of confusion, we will write $P(\omega)$ for $P(\{\omega\})$.

\begin{theorem}[Existence and uniqueness of Entropy]
\label{thm: Existence and Uniqueness of Entropy}
Let $(\Omega, P)$ be a finite, discrete probability space.
Then the functions
\begin{equation}
H(\Omega, P) = -K \sum_{\mathclap{\substack{\omega \in \Omega \\ P(\omega)>0}}} P(\omega) \log_b P(\omega),
\end{equation}
where $K$ is a strictly positive constant and $b>1$ satisfy axioms 1 - 4. These are the only functions satisfying those axioms.
\end{theorem}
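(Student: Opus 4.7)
The plan is to follow Shannon's original sketch, dividing the work into existence and uniqueness. For existence, I would verify directly that $H(\Omega,P) = -K\sum_{\omega:\, P(\omega)>0} P(\omega)\log_b P(\omega)$ satisfies all four axioms. Axiom~1 is immediate since the sum ignores zero-probability outcomes and depends only on the multiset of positive probabilities. Axiom~2 follows from continuity of $p\log_b p$ on $(0,1]$ together with the standard limit $\lim_{p\to 0^+} p\log_b p = 0$, which lets us cross faces of $\Delta^{n-1}$ where coordinates vanish without jumps. Axiom~3 reduces to $K\log_b n < K\log_b(n+1)$. Axiom~4 is a routine computation: for any partition $\mathcal{E}$ and any $\omega \in E$ with $P(E) > 0$, write $P(\omega) = P(E)\,Q_E(\omega)$, expand the logarithm, and regroup the resulting double sum over $E \in \mathcal{E}$ and $\omega \in E$.

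For uniqueness, set $A(n) := H(\Omega,P)$ for any uniform probability space with $|\Omega| = n$; axiom~1 makes $A$ well defined. The first key step is to show $A(s^m) = m\,A(s)$ for integers $s, m \geq 1$. This comes from iterating axiom~4 on the uniform distribution over $s^m$ outcomes, partitioned into $s$ blocks of size $s^{m-1}$: each conditional space is uniform on $s^{m-1}$ outcomes and the quotient measure is uniform on $s$ parts. The main obstacle is the next step, deducing $A(n) = K\log_b n$ for a single positive constant $K$. Following Shannon, for arbitrary $s, t \geq 2$ and any $n \in \N$, choose $m$ with $s^m \leq t^n < s^{m+1}$, equivalently $m/n \leq \log t/\log s < (m+1)/n$. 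Axiom~3 then forces $m\,A(s) \leq n\,A(t) \leq (m+1)\,A(s)$, and dividing through by $n\,A(s)$ and sending $n \to \infty$ sandwiches $A(t)/A(s)$ between $m/n$ and $(m+1)/n$. This forces $A(t)/A(s) = \log t/\log s$, hence $A(n) = K\log_b n$; positivity of $K$ follows from $A(2) > A(1) = 0$ via axiom~3.

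The third step extends the formula to rational probability vectors. Given $p_i = n_i/N$ with $\sum_i n_i = N$, take $\Omega$ uniform on $N$ outcomes and partition it into blocks $E_1,\ldots,E_k$ of sizes $n_1,\ldots,n_k$. Axiom~4 gives $A(N) = H(\mathcal{E},P_{\mathcal{E}}) + \sum_i p_i\, A(n_i)$, the quotient distribution on the blocks being exactly $(p_1,\ldots,p_k)$. Substituting $A(n) = K\log_b n$ and rearranging yields $H(p_1,\ldots,p_k) = K\log_b N - K\sum_i p_i \log_b n_i = -K\sum_i p_i \log_b p_i$. Finally, axiom~2 combined with density of the rational points in $\Delta^{n-1}$ upgrades this to arbitrary $\vec{P}$, the boundary limit $p\log_b p \to 0$ again handling outcomes of zero probability cleanly. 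Changing the logarithm base only rescales $K$, so the family is parametrised up to a single positive multiplicative constant, matching the theorem statement.
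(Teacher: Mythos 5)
Your proposal is correct and follows essentially the same route as the paper: Shannon's sandwich argument $s^m \leq t^n < s^{m+1}$ to pin down $A(n) = K\log_b n$, then axiom~4 on a uniform space partitioned into blocks to handle rational probability vectors, then axiom~2 plus density to pass to arbitrary $\vec{P}$. If anything, you are slightly more complete than the paper, which only asserts sufficiency (``we claim that it is also sufficient'') where you sketch the verification of all four axioms for the explicit formula.
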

\begin{proof}
We reproduce Shannon's proof of his theorem, filling in some details. For each $n \in \N$ let $(\Omega_n, P_n)$ be a finite, discrete probability space with $|\Omega_n| = n$ and with $P_n$ the uniform probability measure on $\Omega_n$, that is, $P_n(\omega)=\frac{1}{n}$ for all $\omega \in \Omega_n$. Suppose for the sake of argument that a function $H$ exists which obeys axioms 1 - 4.
Let $A(n) = H(\Omega_n,P_n)$. Then by axiom (1), $A: \N \to \R$ is well-defined.
It follows from axiom (4) that
\begin{equation}
\label{eq: log-like property of A(n)}
A(st)=A(s)+A(t) \qquad \text{for all } s, t \in \N.
\end{equation}
We also have, by axiom (3), that $A(n)$ is strictly increasing in $n$.
Fix $s,t \in \N$ with $s,t \geq 2$, and let $n \in \N$.
Then provided $n$ is large enough, there exists a unique $m \in \N$ such that
\begin{equation}
\label{eq: inequalities of powers}
s^m \leq t^n < s^{m+1}.
\end{equation}
Since $A(n)$ is strictly increasing, we have
\begin{equation}
A(s^m) \leq A(t^n) < A(s^{m+1}),
\end{equation}
hence by \eqref{eq: log-like property of A(n)},
\begin{equation}
mA(s) \leq nA(t) < (m+1)A(s),
\end{equation}
and
\begin{equation}
\label{eq: inequality for the ratio of As}
\frac{m}{n} \leq \frac{A(t)}{A(s)} < \frac{m+1}{n}.
\end{equation}
In this last step we have used the fact that $A(s)$ must be positive.
Indeed, for any $t \in \N$ we have $A(t) = A(1\cdot t) = A(1)+A(t)$, hence $A(1) = 0$. But $A(n)$ is strictly increasing in $n$, therefore $A(n) > 0$ for all $n \geq 2$.
Let $b > 1$.
The function $\log_b (x)$ is also strictly increasing in $x$, therefore \eqref{eq: inequalities of powers} also implies that
\begin{equation}
\log_b s^m \leq \log_b t^n < \log_b s^{m+1},
\end{equation}
so
\begin{equation}
m \log_b s \leq n \log_b t < (m+1) \log_b s,
\end{equation}
and
\begin{equation}
\label{eq: inequality for the ratio of the logs}
\frac{m}{n} \leq \frac{\log_b t}{\log_b s} < \frac{m+1}{n}.
\end{equation}
Together, \eqref{eq: inequality for the ratio of As} and \eqref{eq: inequality for the ratio of the logs} imply that
\begin{equation}
\left |
\frac{A(t)}{A(s)} - \frac{\log_b(t)}{\log_b(s)}
\right |
< \frac{1}{n}.
\end{equation}
Since $n$ is arbitrary, we have
\begin{equation}
\frac{A(t)}{A(s)} = \frac{\log_b(t)}{\log_b(s)} \qquad \text{for all } s, t \in \N, s, t \geq 2.
\end{equation}
Fixing $s \in \N$ with $s \geq 2$, we have
\begin{equation}
A(t) = \frac{A(s)}{\log_b(s)} \log_b(t) \qquad \text{for all } t \in \N \text{ with } t \geq 2,
\end{equation}
hence
\begin{equation}
A(t) = K \log_b(t) \qquad \text{for all } t \in \N \text{ with } t \geq 2,
\end{equation}
where $K>0$ is a strictly positive constant depending on $b$.
Since $A(1)=0$, we even have
\begin{equation}
A(t) = K \log_b(t) \qquad \text{for all } t \in \N.
\end{equation}
So far we have found a formula for $H(\Omega,P)$ in the case that $P$ is the uniform distribution on $\Omega$, i.e. when all outcomes are equally likely.
We need to be able to relax this condition.
In fact, let $(\Omega, P)$ be a discrete probability space with $n$ outcomes, not necessarily equally likely, but having comensurable probabilities $P(\omega_i)$. Since the probabilities sum to $1$, this comensurability is equivalent to the $P(\omega_i)$ all being rational.
Assuming for simplicity that the $P(\omega_i)$ are all strictly positive, we can write
\begin{equation}
P(\omega_i) = \frac{s_i}{m} \qquad \text{for all } i,
\end{equation}
where $m \in \N$ satisfies $m > n$, and the $s_i \in \N$ satisfy $s_1 + \cdots + s_n = m$.
Now let $(\Omega', P')$ be a discrete probability space with $m$ equally likely outcomes, and let $\mathcal{E}$ be a partition of $\Omega$ into $n$ nonempty parts $E_1, \ldots, E_n$ with sizes $s_1, \ldots, s_n$, respectively.
By axiom (4) we have
\begin{equation}
A(m) = H(\Omega', P') = H(\mathcal{E}, P_{\mathcal{E}}) + \sum_i P(E_i) H(E_i, Q_{E_i}) = H(\Omega, P) + \sum_i P(\omega_i) A(s_i),
\end{equation}
hence
\begin{align}
H(\Omega,P)
&= A(m) - \sum_i P(\omega_i) A(s_i) \\
&= \sum_i P(\omega_i) \left [A(m) - A(s_i) \right ] \\
&= -K \sum_i P(\omega_i) \left [ \log_b(s_i) - \log_b(m) \right ] \\
&= -K \sum_i P(\omega_i) \log_b \left ( \frac{s_i}{m} \right ) \\
&= -K \sum_i P(\omega_i) \log_b P(\omega_i).
\end{align}
Note that in the case of equally likely outcomes, we recover $A(n)$ from this formula.
Now by axiom (2) and by the density of $\Q^n$ in $\R^n$ we can extend this formula even to the case of irrational probabilities, to obtain
\begin{equation}
H(\Omega, P) = -K \sum_i P(\omega_i) \log_b P(\omega_i)
\end{equation}
for any finite probability space whose outcomes have strictly positive probability.
Having thus shown that this form for $H$ is necessary, if the 4 axioms are to hold, we claim that it is also sufficient.
\end{proof}
Note that by changing our choice of base $b>1$, without any loss of generality we can omit the constant $K>0$, i.e. assume that $K=1$.
Indeed, if $b, b' > 1$ and $K > 0$, then $K \log_b x = K \frac{\log_{b'} x}{\log_{b'}b} = \frac{K}{\log_{b'}b} \log_{b'} x$, where $\frac{K}{\log_{b'}b}>0$ is also a positive constant.
With the convention that $K=1$, if we set $b=2$ then the units of entropy $H$ are known as ``bits", a contraction of ``binary digits", as explained in \cite{Shannon}. In fact, with $K=1$ and $b=2$, the entropy of a probability space having $2^n$ equally likely outcomes will be $A(2^n) = 1 \cdot \log_2 (2^n) = n$ bits. This makes sense, since $n$ binary digits can represent $2^n$ possible states.

Whenever $X$ is a random variable with finite range $\{x_1, x_2, \ldots, x_n\}$, we will write $H(X)$ to mean $H(\{x_1, x_2, \ldots, x_n\}, P)$, where $P$ is the discrete probability measure on the range of $X$ defined by $P(x_i) = P(X = x_i)$ for all $i \in \{1, 2, \ldots, n\}$.
So in the case of a random variable $X$ with finite range $\{x_1, x_2, \ldots, x_n\}$, the formula in Theorem \ref{thm: Existence and Uniqueness of Entropy} becomes
\begin{equation}
H(X) = -K \sum_{\mathclap{\substack{1\leq i \leq n \\ P(x_i)>0}}} P(x_i) \log P(x_i)
\end{equation}


\section{Muroga's explicit solution of Shannon's implicit equation for $C$}

Suppose that the input to a discrete channel is represented by a random variable $X$ with range $\{x_1, \ldots x_n\}$ and that the output is represented by a random variable $Y$ with range $\{y_1, \ldots y_m\}$.
Throughout this section we will assume that the message to be transmitted comprises a sequence of symbols being independent random samplings of $X$, and that the message is perturbed by noise in transmission, each symbol being perturbed independently.

Let $p_i = P(X = x_i)$, $r_j = P(Y = y_j)$ and $p_{i,j}=P(X=x_i \wedge Y=y_j)$.
Also let
\begin{equation}
q_{i,j}=
\left \{
\begin{array}{ll}
P(Y=y_j | X=x_i),	&	p_i \neq 0, \\
0,						&	p_i = 0.
\end{array}
\right .
\end{equation}
Note that
\begin{equation}
p_{i,j}=p_i q_{i,j},	\qquad \text{for all } i, j.
\end{equation}
Fix $b > 1$ to act as the base for any logarithms and exponentials. In case $b=2$ the units of entropy will be bits, which will of course be particularly appropriate when we study binary channels.
Also fix an arbitrary constant $w \in \R$.
Define
\begin{equation}
\log^* x =
\left \{
\begin{array}{ll}
\log_b x,	&	x > 0, \\
w,			&	x=0.
\end{array}
\right .
\end{equation}
Then the function $x \log^* x$ is continuous on the open interval $(0, \infty)$, and since the indeterminate limit $\lim_{x \to 0^+} x \log^* x = \lim_{x \to 0^+} x \log_b x$ exists and is equal to $0=0 \cdot w=0 \cdot \log^* 0$, it follows that $x \log^* x$ is actually continuous even on the closed interval $[0, \infty)$, and in particular on $[0,1]$.
The function $x \log^* x$ is differentiable on $(0, \infty)$, with $(x \log^* x)' = (x \log_b x)' = \log_b x + \frac{1}{\ln b} = \log_b x + \log_b e = \log^* x + \log^* e$ for all $x > 0$.
Hence $(x \log^* x)'$ tends to $-\infty$ as $x$ tends to $0$ from the right.
Also the right-hand derivative of $x \log^* x$ at $x=0$ is 
\begin{equation}
\lim_{h\to0^+} \frac{(0+h)\log^*(0+h)-0\cdot\log^* 0}{h} = \lim_{h\to0^+} \log^* h = \lim_{h\to0^+} \log_b h = -\infty.
\end{equation}
We have that
\begin{equation}
H(X)=-\sum_i p_i \log^* p_i,
\end{equation}
\begin{equation}
H(Y)=-\sum_j r_j \log^* r_j,
\end{equation}
and
\begin{equation}
H(X,Y)=-\sum_{i,j} p_{i,j} \log^* p_{i,j},
\end{equation}
where by $H(X,Y)$ we mean the entropy of the joint distribution of $X$ and $Y$.
We also define
\begin{equation}
H(Y|X=x_i)=-\sum_{j}q_{i,j}\log^* q_{i,j},
\end{equation}
the conditional entropy of $Y$ given that $X=x_i$. Note that in case $P(X=x_i)=0$, then the conditional probabilities $P(Y=y_j|X=x_i)$ are undefined, but by our definitions of $q_{i,j}$ and $\log^* x$, $H(Y|X=x_i)$ will in this case be equal to $0$.
We also define the expectation of this last defined quantity with respect to $X$ as the conditional entropy of $Y$ given $X$:
\begin{equation}
H(Y|X)= \sum_i p_i H(Y|X=x_i).
\end{equation}
From this definition, and from axioms (1) and (4) for $H$, it follows that
\begin{equation}
\label{eq: joint-conditional equation for Y given X}
H(X,Y)=H(X)+H(Y|X),
\end{equation}
and by the symmetry of the left-hand side, also that
\begin{equation}
\label{eq: joint-conditional equation for X given Y}
H(X,Y)=H(Y)+H(X|Y),
\end{equation}
We also have
\begin{align}
H(Y|X)
&=\sum_i p_i H(Y|X=x_i) \\
\label{eq: conditional entropy with qs}
&=-\sum_{i,j} p_i q_{i,j} \log^* q_{i,j} \\
&=-\sum_{i,j} p_{i,j} \log^* q_{i,j}.
\end{align}
\begin{theorem}
\label{thm: upper bound on entropy of product space}
If $X$ and $Y$ are random variables with finite range, then $H(X,Y) \leq H(X) + H(Y)$, with equality if, and only if, $X$ and $Y$ are independent.
\end{theorem}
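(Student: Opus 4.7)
The plan is to show that $H(X)+H(Y)-H(X,Y)$, which is essentially the Kullback--Leibler divergence of the joint distribution $(p_{i,j})$ from the product $(p_i r_j)$ of its marginals, is non-negative with equality iff $p_{i,j}=p_i r_j$ for every $(i,j)$. The non-negativity will come from the elementary inequality $\ln t \leq t - 1$ (with equality iff $t=1$), applied in the standard Gibbs-inequality pattern.

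First I would use $p_i = \sum_j p_{i,j}$ and $r_j = \sum_i p_{i,j}$, together with the convention $0\cdot\log^* 0 = 0$ (which was checked in the excerpt via the limit $\lim_{x\to 0^+} x \log^* x = 0$), to rewrite both marginal entropies as double sums. This gives
\begin{equation}
H(X)+H(Y)-H(X,Y) = -\sum_{i,j} p_{i,j}\bigl(\log^* p_i + \log^* r_j - \log^* p_{i,j}\bigr).
\end{equation}
Any summand with $p_{i,j}=0$ vanishes thanks to the leading factor; and whenever $p_{i,j}>0$, the inequalities $p_{i,j}\leq p_i$ and $p_{i,j}\leq r_j$ force $p_i, r_j > 0$, so the three occurrences of $\log^*$ inside that summand all collapse to honest $\log_b$. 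Hence
\begin{equation}
H(X)+H(Y)-H(X,Y) = \sum_{\substack{i,j \\ p_{i,j}>0}} p_{i,j} \log_b \frac{p_{i,j}}{p_i r_j}.
\end{equation}

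Next I would apply $-\ln t \geq 1 - t$ with $t = p_i r_j / p_{i,j}$, divide by $\ln b > 0$, and sum to obtain
\begin{equation}
H(X)+H(Y)-H(X,Y) \geq \frac{1}{\ln b}\biggl(1 - \sum_{\substack{i,j \\ p_{i,j}>0}} p_i r_j \biggr) \geq 0,
\end{equation}
the outer inequality coming from $\sum_{i,j} p_i r_j = 1$. For the equality case, the pointwise inequality is tight exactly when $p_{i,j}=p_i r_j$ (on the support of the joint), and the outer inequality is tight exactly when $p_i r_j = 0$ whenever $p_{i,j}=0$; taken together these give $p_{i,j}=p_i r_j$ for every $(i,j)$, i.e.\ the independence of $X$ and $Y$.

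The main obstacle is not the analytic inequality but rather the careful bookkeeping of zero-probability pairs under the $\log^*$ convention; the single observation $p_{i,j} \leq \min(p_i, r_j)$ is what makes every degenerate term behave, and isolating it is really the crux of the argument.
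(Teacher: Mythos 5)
Your proof is correct, and it reaches the same key reduction as the paper—rewriting $H(X)+H(Y)-H(X,Y)$ as $\sum_{i,j:p_{i,j}>0} p_{i,j}\log_b\bigl(p_{i,j}/(p_i r_j)\bigr)$, with the same observation that $p_{i,j}\leq\min(p_i,r_j)$ tames the $\log^*$ convention on the support—but from there the two arguments diverge. The paper invokes Jensen's inequality for the (strictly concave) logarithm, pulling the sum inside the log to obtain $-\log_b\sum_{p_{i,j}>0}p_i r_j\geq-\log_b 1=0$; its equality analysis then requires two conditions (the ratio $p_i r_j/p_{i,j}$ is constant on the support, and $p_i r_j=0$ off the support) followed by an extra computation showing that the constant must equal $1$. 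You instead apply the tangent-line bound $\ln t\leq t-1$ pointwise (Gibbs' inequality), which yields non-negativity just as quickly but makes the equality case cleaner: tightness forces $p_{i,j}=p_i r_j$ on the support directly, with no intermediate constant to identify, and the off-support condition ($p_i r_j=0$ wherever $p_{i,j}=0$) is the same in both treatments. The trade-off is minor—Jensen generalizes more readily to other concave functionals, while your elementary bound keeps the whole argument at the level of single-variable calculus and shortens the equality bookkeeping.
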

\begin{proof}
\begin{align*}
&H(X)+H(Y)-H(X,Y) \\
&= -\sum_i p_i \log^* p_i - \sum_j r_j \log^* r_j - \left (- \sum_{i,j} p_{i,j} \log^* p_{i,j} \right ) \\
&= -\sum_i \left (\sum_j p_{i,j} \right ) \log^* p_i - \sum_j \left ( \sum_i p_{i,j} \right ) \log^* r_j - \left (- \sum_{i,j} p_{i,j} \log^* p_{i,j} \right ) \\
&= -\sum_{i,j} p_{i,j} \left (\log^* p_i + \log^* r_j - \log^* p_{i,j} \right ) \\
&= -\sum_{i,j:p_{i,j}>0} p_{i,j} \left (\log_b \frac{p_i r_j}{p_{i,j}} \right ).
\end{align*}
So, by the strict convexity of the logarithm,
\begin{align*}
&H(X)+H(Y)-H(X,Y) \\
&\geq -\log_b \sum_{i,j:p_{i,j}>0} \left (p_{i,j} \frac{p_i r_j}{p_{i,j}} \right ) \\
&= -\log_b \sum_{i,j:p_{i,j}>0} p_i r_j \\
&\geq -\log_b \sum_{i,j} p_i r_j \\
&=-\log_b \left [ \left ( \sum_i p_i \right ) \left ( \sum_j r_j \right ) \right ] \\
&= -\log_b [1 \cdot 1] \\
&= 0,
\end{align*}
with equality if and only if the following two conditions hold:
\begin{enumerate}
\item
$\frac{p_i r_j}{p_{i,j}} = c$, a constant, for all $i$ and $j$ such that $p_{i,j}>0$.
\item
For all $i,j$, $(p_i>0 \wedge r_j>0) \Rightarrow p_{i,j}>0$.
\end{enumerate}
Assuming that conditions (1) and (2) hold, from condition (1) we have that $p_i r_j = c p_{i,j}$ for all $i$ and $j$ such that $p_{i,j}>0$, hence
\begin{equation}
\sum_{i,j:p_{i,j}>0} p_i r_j = c \sum_{i,j:p_{i,j}>0} p_{i,j} = c \sum_{i,j} p_{i,j} = c.
\end{equation}
But it follows from condition (2) that $p_{i,j}>0$ if and only if both $p_i$ and $r_j$ are strictly positive, hence
\begin{equation}
c=\sum_{i,j:p_{i,j>0}} p_i r_j = \sum_{i,j:p_i>0 \wedge r_j>0} p_i r_j = \sum_{i,j} p_i r_j= \left ( \sum_i p_i \right ) \left ( \sum_j r_j \right ) = 1 \cdot 1 = 1.
\end{equation}
Therefore by condition (1) we have that $p_i r_j = p_{i,j}$ for all $i$ and $j$ such that $p_{i,j} > 0$, that is, by condition (2), for all $i$ and $j$ such that $p_i>0$ and $r_j>0$.
It follows that
\begin{equation}
p_{i,j} = p_i r_j \qquad \text{for all } i, j,
\end{equation}
and therefore, $X$ and $Y$ are independent.
Conversely, if we assume that $X$ and $Y$ are independent, then conditions (1) and (2) follow immediately.
\end{proof}
Now define the ``mutual information of $X$ and $Y$" as the deficit appearing in theorem \ref{thm: upper bound on entropy of product space}, i.e.
\begin{equation}
I(X,Y)=H(X)+H(Y)-H(X,Y).
\end{equation}
Then $I(X,Y)$ is non-negative. Moreover $I(X,Y)=0$ if, and only if, $X$ and $Y$ are independent.
By equations \eqref{eq: joint-conditional equation for Y given X} and \eqref{eq: joint-conditional equation for X given Y} we also have the relations
\begin{equation}
I(X,Y)=H(Y)-H(Y|X)
\end{equation}
and
\begin{equation}
I(X,Y)=H(X)-H(X|Y).
\end{equation}
The conditional entropy $H(X|Y)$ of the input $X$ given the output $Y$ is known as the ``equivocation".
Ideally, the equivocation would be zero, in which case $I(X,Y)$ would equal $H(X)$.
In case $H(X|Y)=H(X)$, that is, in case knowledge of the output $Y$ makes no difference in our degree of uncertainty as to the input $X$, we have that $I(X,Y) = 0$. Thus the mutual information $I(X,Y)$ measures the rate of transmission of information per symbol over the channel. Looked at another way, $I(X,Y)$ measures a type of correlation between $X$ and $Y$.

Shannon defines the ``capacity" $C$ of the channel in this case as the maximum mutual information over all possible distributions of the input $X$. That is,
\begin{equation}
C=\max_{\vec{p} \in \Delta^{n-1}} I(X,Y),
\end{equation}
where $\vec{p}=(p_1, \ldots, p_n)$.
Following Shannon and Muroga, we try to compute this maximum using the method of Lagrange multipliers. One of Muroga's many contributions is to account for the case in which the resulting maximising $\vec{p}$ is unfeasible, i.e. is outside $\Delta^{n-1}$. Muroga also accounts completely for the case that the transition matrix $Q = q_{i,j}$ is non-square, and even for the case where $Q$ is less than full rank. He shows exactly what must be done to compute $C$ in this general setting. Shannon seems to have overlooked these points.
By \eqref{eq: conditional entropy with qs} we have
\begin{align}
I(X,Y)
&=H(Y)-H(Y|X) \\
&=-\sum_j r_j \log^* r_j + \sum_{i,j} q_{i,j} p_i \log^* q_{i,j} \\
&=-\sum_j \left (\sum_i q_{i,j} p_i \right ) \log^* \left ( \sum_i q_{i,j} p_i \right ) + \sum_{i,j} q_{i,j} p_i \log^* q_{i,j} \\
&=-\sum_{i,j} q_{i,j} p_i \log^* \left ( \sum_i q_{i,j} p_i \right ) + \sum_{i,j} q_{i,j} p_i \log^* q_{i,j} \\
\label{eq: convenient form for optimal I}
&=\sum_{i,j} q_{i,j} p_i \left (\log^* q_{i,j} - \log^* \left ( \sum_i q_{i,j} p_i \right ) \right ),
\end{align}
where we point out that the index $i$ in the second summation is not bound by the first summation.

We are interested to know the partial derivative of $I(X,Y)$ at a fixed point $\vec{p} \in \Delta^{n-1}$, with respect to some particular $p_k$. For the time being we will suppose that $p_k \notin \{0,1\}$.
We have
\begin{align}
\frac{\partial}{\partial p_k} I(X,Y)
&= \frac{\partial}{\partial p_k} \left \{ \sum_{i,j} q_{i,j} p_i \left (\log^* q_{i,j} - \log^* \left ( \sum_i q_{i,j} p_i \right ) \right ) \right \} \\
&= \sum_{i,j} q_{i,j} \frac{\partial}{\partial p_k} \left \{ p_i \left (\log^* q_{i,j} - \log^* \left ( \sum_i q_{i,j} p_i \right ) \right ) \right \} \\
&= \sum_{i,j} q_{i,j} \left \{ \delta_{i,k} \log^* q_{k,j} - \frac{\partial}{\partial p_k} \left [ p_i \log^* \left ( \sum_i q_{i,j} p_i \right ) \right ] \right \} \\
\label{eq: note 1}
&= \sum_{i,j: r_j>0} q_{i,j} \left \{ \delta_{i,k} \log^* q_{k,j} - \frac{\partial}{\partial p_k} \left [ p_i \log_b \left ( \sum_i q_{i,j} p_i \right ) \right ] \right \} \\
&= \sum_{i,j: r_j>0} q_{i,j} \left \{ \delta_{i,k} \log^* q_{k,j} - \left [ \delta_{i,k} \log_b \left ( \sum_i q_{i,j} p_i \right ) + \frac{p_i}{\ln b} \frac{q_{k,j}}{\sum_i q_{i,j} p_i} \right ] \right \} \\
&= \sum_{j:r_j>0} q_{k,j} \log^* q_{k,j} - \sum_{j:r_j>0} q_{k,j} \log_b \left ( \sum_i q_{i,j} p_i \right ) \\
&\phantom{potatoespotatoespotatoes} - \frac{1}{\ln b} \sum_{j:r_j>0} \sum_i q_{i,j} p_i \frac{q_{k,j}}{\sum_i q_{i,j} p_i} \\
&= \sum_{j:r_j>0} q_{k,j} \log^* q_{k,j} - \sum_{j:r_j>0} q_{k,j} \log_b \left ( \sum_i q_{i,j} p_i \right ) - \frac{1}{\ln b} \sum_{j:r_j>0} q_{k,j} \\
&=\sum_j q_{k,j} \log^* q_{k,j} - \sum_j q_{k,j} \log^* \left ( \sum_i q_{i,j} p_i \right ) - \frac{1}{\ln b} \\
&=\sum_j q_{k,j} \left [ \log^* q_{k,j} - \log^* \left ( \sum_i q_{i,j} p_i \right ) \right ] - \frac{1}{\ln b},
\end{align}
where at \eqref{eq: note 1} we have used that fact that by definition, $q_{i,j}=0$ whenever $r_j=0$, whether or not $p_i = 0$, and also the fact that $\sum_i q_{i,j} p_i = r_j$.
Since $(1, \ldots, 1)$ is normal to $\Delta^{n-1}$, the method of Lagrange multipliers dictates that we want $\nabla I(X,Y)$ to be parallel to $(1, \ldots, 1)$.
It follows that we want
\begin{equation}
\label{eq: implicit equation for the p_i}
\sum_j q_{k,j} \left [ \log^* q_{k,j} - \log^* \left ( \sum_i q_{i,j} p_i \right ) \right ] = \mu \qquad \text{for all } k,
\end{equation}
where $\mu$ is some constant.
Multiplying by $p_k$ and summing over $k$,
\begin{equation}
\sum_{j,k} q_{k,j} p_k \left [ \log^* q_{k,j} - \log^* \left ( \sum_i q_{i,j} p_i \right ) \right ] = \mu \sum_k p_k = \mu.
\end{equation}
Given that the optimal values of $\vec{p}$ must obey this last equation (ignoring questions of feasibility for now), it follows from \eqref{eq: convenient form for optimal I} that
\begin{equation}
\mu = C
\end{equation}
(not $\mu = -C$, a mistake in Shannon which Muroga corrects).
We would like to isolate the $p_i$ in \eqref{eq: implicit equation for the p_i}.
We have that
\begin{equation}
\sum_j q_{k,j} \log^* \left (\sum_i q_{i,j} p_i \right ) = \sum_j q_{k,j} \log^* q_{k,j} - C \qquad \text{for all } k,
\end{equation}
or, as a matrix equation,
\begin{equation}
Q
\left (
\begin{array}{c}
\log^* \left ( \sum_i q_{i,1} p_i \right ) \\
\vdots \\
\log^* \left ( \sum_i q_{i,m} p_i \right )
\end{array}
\right )
=
\left (
\begin{array}{c}
\sum_j q_{1,j} \log^* q_{1,j} - C \\
\vdots \\
\sum_j q_{n,j} \log^* q_{n,j} - C
\end{array}
\right ),
\end{equation}
where $Q = q_{i,j}$ is the so-called transition matrix.
From this point, Shannon attempts to solve for $C$ in terms of $Q$ alone by inverting $Q$, in the special case in which $Q$ is square and invertible.
However, he is unable to eliminate the $p_i$ from his equation, meaning that his is an implicit rather than an explicit expression for $C$.
Muroga's analysis begins where Shannon left off, and successfully eliminates the $p_i$.
Here Shannon mistakenly calls $Q$ what is actually $Q^T$.
Muroga misses the opportunity to correct that mistake, instead simply writing that the order of suffices in a certain matrix product is different from the usual expression.
Although Muroga's conclusions are not harmed by this oversight, we take the chance to correct it here, and also to simplify slightly the argument which Muroga uses to eliminate the $p_i$ from Shannon's equation, and isolate $C$.
Let us suppose then, that $Q$ is square and invertible, with inverse $F = f_{i,j}$.
We have
\begin{equation}
\left (
\begin{array}{c}
\log^* \left ( \sum_i q_{i,1} p_i \right ) \\
\vdots \\
\log^* \left ( \sum_i q_{i,m} p_i \right )
\end{array}
\right )
=
F
\left (
\begin{array}{c}
\sum_j q_{1,j} \log^* q_{1,j} - C \\
\vdots \\
\sum_j q_{n,j} \log^* q_{n,j} - C
\end{array}
\right ).
\end{equation}
Making the simplifying assumption that $p_i, r_j > 0$ for all $i$ and $j$, we exponentiate with base $b$, obtaining
\begin{equation}
\left (
\begin{array}{c}
\sum_i q_{i,1} p_i \\
\vdots \\
\sum_i q_{i,m} p_i
\end{array}
\right )
=
\exp_b \left [
F
\left (
\begin{array}{c}
\sum_j q_{1,j} \log^* q_{1,j} - C \\
\vdots \\
\sum_j q_{n,j} \log^* q_{n,j} - C
\end{array}
\right )
\right ],
\end{equation}
that is,
\begin{equation}
Q^T \vec{p}=
\exp_b \left [
F
\left (
\begin{array}{c}
\sum_j q_{1,j} \log^* q_{1,j} - C \\
\vdots \\
\sum_j q_{n,j} \log^* q_{n,j} - C
\end{array}
\right )
\right ],
\end{equation}
hence
\begin{equation}
\vec{p}=
F^T
\exp_b \left [
F
\left (
\begin{array}{c}
\sum_j q_{1,j} \log^* q_{1,j} - C \\
\vdots \\
\sum_j q_{n,j} \log^* q_{n,j} - C
\end{array}
\right )
\right ].
\end{equation}
So for all $i \in \{1, \ldots, n\}$, we have
\begin{align}
p_i
&=\sum_k (F^T)_{i,k} \exp_b \left [\sum_l f_{k,l} \left (\sum_j q_{l,j} \log^* q_{l,j} - C \right ) \right ] \\
&= \sum_k f_{k,i} \exp_b \left [ -C \sum_l f_{k,l} + \sum_{l,j} f_{k,l} q_{l,j} \log^* q_{l,j} \right ].
\end{align}
Summing over $i$ now yields
\begin{equation}
\label{eq: implicit equation in p and C}
1 = \sum_i p_i = \sum_{i,k} f_{k,i} \exp_b \left [ -C \sum_l f_{k,l} + \sum_{l,j} f_{k,l} q_{l,j} \log^* q_{l,j} \right ].
\end{equation}
Having assumed all $p_i$ to be strictly positive, the row sums of $Q$ all equal $1$, hence the vector $(1, \ldots, 1)^T$ is an eigenvector of $Q$ with eigenvalue $1$, i.e.
\begin{equation}
Q
\left (
\begin{array}{c}
1 \\
\vdots \\
1
\end{array}
\right )
=
\left (
\begin{array}{c}
1 \\
\vdots \\
1
\end{array}
\right ),
\end{equation}
from which
\begin{equation}
\left (
\begin{array}{c}
1 \\
\vdots \\
1
\end{array}
\right )
=
F
\left (
\begin{array}{c}
1 \\
\vdots \\
1
\end{array}
\right ),
\end{equation}
whereby the row sums of $F$, also, are equal to $1$, that is, $\sum_l f_{k,l} = 1$ for all $k$.
Muroga uses a more complicated argument to prove this point, involving cofactors of $Q$.
Notwithstanding those small mistakes we have mentioned, this insight allowed Muroga to isolate $C$ in \eqref{eq: implicit equation in p and C} as follows:
\begin{align}
1
&=\sum_{i,k} f_{k,i} \exp_b \left [ -C \sum_l f_{k,l} + \sum_{l,j} f_{k,l} q_{l,j} \log^* q_{l,j} \right ] \\
&=\sum_{i,k} f_{k,i} \exp_b \left [ -C + \sum_{l,j} f_{k,l} q_{l,j} \log^* q_{l,j} \right ] \\
&=\sum_k \left \{ \exp_b \left [ -C + \sum_{l,j} f_{k,l} q_{l,j} \log^* q_{l,j} \right ] \sum_i f_{k,i} \right \} \\
&=\sum_k \left \{ \exp_b \left [ -C + \sum_{l,j} f_{k,l} q_{l,j} \log^* q_{l,j} \right ] \right \} \\
&=\exp_b(-C) \sum_k \exp_b \left [ \sum_{l,j} f_{k,l} q_{l,j} \log^* q_{l,j} \right ].
\end{align}
Hence
\begin{equation}
C = \log_b \left [\sum_k \exp_b \left ( \sum_{l,j} f_{k,l} q_{l,j} \log^* q_{l,j} \right ) \right ].
\end{equation}
This is essentially equation (8) on pg. 485 of Muroga.

In \cite{Muroga}, Muroga now changes to an approach which is in a sense dual to the preceding argument. He uses the method of Lagrange multipliers to maximise $I(X,Y)$ with respect to $\vec{r}=(r_1, \ldots, r_m) \in \Delta^{m-1}$, instead of with respect to $\vec{p} \in \Delta^{n-1}$. This yields an expression for the optimal $\vec{r}$ which is certainly feasible, i.e. lies in $\Delta^{m-1}$, since the $r_j$ sum to 1, and since they are expressed as exponentials, and are therefore positive. The problem then remains to check whether $\vec{p}$ is also feasible. In \cite{Silverman}, in the special case of a binary channel, Silverman, following Muroga, verifies that $\vec{p}$ is indeed feasible. His method actually yields the stronger conclusion that $p_1, p_2 \in [1/e, 1-1/e]$.
Silverman's argument relies on a careful analysis of $H(X)$ as a function of $p_1$ only, and its first and second derivatives with respect to $p_1$. He leaves several of the details to the reader.
We present here, in case it is of interest, a proof that $p_1$ and $p_2$ must be non-negative, from which it follows, under the assumption that $p_1+p_2=1$, that $\vec{p}$ is feasible.

In Muroga's Theorem 1, in terms of the present article, he begins by supposing that the linear system
\begin{equation}
\label{eq: linear system}
Q
\left (
\begin{array}{c}
X_1 \\
\vdots \\
X_m
\end{array}
\right )
=
\left (
\begin{array}{c}
\sum_j q_{1,j} \log^* q_{1,j} \\
\vdots \\
\sum_j q_{n,j} \log^* q_{n,j}
\end{array}
\right )
\end{equation}
admits solutions $(X_1, \ldots, X_m)^T$.
We will follow Muroga's argument here.
Consider the bilinear form
\begin{equation}
(p_1, \ldots, p_m)
Q
\left (
\begin{array}{c}
X_1 \\
\vdots \\
X_m
\end{array}
\right ).
\end{equation}
It can be expressed as
\begin{equation}
\sum_j X_j \sum_i q_{i,j} p_i = \sum_j X_j \sum_i p_{i,j} = \sum_j X_j r_j,
\end{equation}
or as
\begin{equation}
\sum_i p_i \sum_j q_{i,j} X_j = \sum_i p_i \sum_j q_{i,j} \log^* q_{i,j} = -H(Y|X),
\end{equation}
in view of \eqref{eq: linear system}.
Therefore
\begin{equation}
H(Y|X)=-\sum_j r_j X_j
\end{equation}
and
\begin{align}
I(X,Y)
&=H(Y)-H(Y|X) \\
&=-\sum_j r_j \log^* r_j - H(Y|X) \\
\label{eq: formula for I in terms of the rj}
&= -\sum_j r_j \log^* r_j + \sum_j r_j X_j.
\end{align}
Noting that the $X_j$ depend only on $Q$, our goal is now to maximise this last expression for $I(X,Y)$, with respect to the variables $r_j$.
We assume for simplicity that $\vec{r} \in (\Delta^{m-1})^{\circ}$.
Fixing $l \in \{1, \ldots, m\}$, we have
\begin{align}
\frac{\partial I}{\partial r_l}
&= -\sum_j \frac{\partial}{\partial r_l} (r_j \log_b r_j) + X_l \\
&= -\sum_j \left ( \delta_{j,l} \log_b r_j + r_j \frac{\delta_{j,l}}{r_j} \right ) + X_l \\
&= -(\log_b r_l + 1) + X_l,
\end{align}
so we want
\begin{equation}
\label{eq: the second use of mu}
-\log_b r_l + X_l = \mu \qquad \text{for all } l,
\end{equation}
for some constant $\mu$.
Multiplying by $r_l$ and summing over $l$,
\begin{equation}
\label{eq: apply assumption that the rj sum to 1}
-\sum_l r_l \log_b r_l + \sum_l r_l X_l = \mu \sum_l r_l = \mu.
\end{equation}
In view of \eqref{eq: formula for I in terms of the rj}, it follows that $\mu = C$.
Substituting $\mu=C$ in \eqref{eq: the second use of mu} and isolating $r_l$, we obtain $r_l = \exp_b (X_l - C)$, or
\begin{equation}
\label{eq: rj in terms of Xj and C}
r_j = \exp_b (X_j - C) \qquad \text{for all } j,
\end{equation}
which is Muroga's equation (16).
In particular, $r_j > 0$ for all $j$, and since $\sum_j r_j = 1$ (a condition which we forced in \eqref{eq: apply assumption that the rj sum to 1}), it follows that $\vec{r} \in \Delta^{m-1}$, i.e. that the optimal $\vec{r}$ is in fact feasible.

Summing \eqref{eq: rj in terms of Xj and C} over $j$ we obtain an explicit formula for $C$ as follows:
\begin{equation}
1=\sum_j r_j = \sum_j \exp_b (X_j - C) = \exp_b(-C) \sum_j \exp_b X_j,
\end{equation}
whence
\begin{equation}
\label{eq: C in terms of the Xj}
C = \log_b \left [ \sum_j \exp_b X_j \right ]
\end{equation}

\section{The capacity of a binary channel}

Suppose now that $X$ and $Y$ each have exactly 2 states, and suppose that the transition matrix $Q$ is given by
\begin{equation}
Q=
\left (
\begin{array}{cc}
1-a	&	a \\
1-c	&	c
\end{array}
\right ),
\end{equation}
as in \cite{Moore_and_Shannon}, where $a, c \in [0,1]$.
First suppose that $\vec{p} \in \{(1,0),(0,1)\}=\partial \Delta^1$, the boundary of the standard 2-simplex. Then $H(X) = -1 \cdot \log_b 1 = 0$, and $H(X,Y)=H(Y)$, hence $I(X,Y)=H(X)+H(Y)-H(X,Y)=0$.
Therefore we will suppose from now on that $\vec{p} \in (\Delta^1)^{\circ}$, the interior of the simplex.
Note that $|Q|=c-a$, hence $Q$ is singular if, and only if, $a=c$.
If $a=c$ then, as is pointed out in \cite{Moore_and_Shannon}, the capacity $C$ of the channel is zero.
Indeed, in case $a=c$ we have $H(Y|X)=H(Y)$, hence $I(X,Y) = H(Y)-H(Y|X) = 0$, whatever the choice of $\vec{p} \in \Delta^1$.
Therefore we will assume from now on that $Q$ is nonsingular.

Let us follow Muroga by solving the linear system \eqref{eq: linear system}.
For convenience of notation, we define $a'=1-a$ and $c'=1-c$.
In our case, \eqref{eq: linear system} becomes
\begin{equation}
\left (
\begin{array}{cc}
a'	&	a \\
c'	&	c
\end{array}
\right )
\left (
\begin{array}{c}
X_1 \\
X_2
\end{array}
\right )
=
\left (
\begin{array}{c}
a' \log_2 a' + a \log_2 a \\
c' \log_2 c' + c \log_2 c
\end{array}
\right ).
\end{equation}
The solution to which is
\begin{equation}
\label{eq: solution for X1 and X2}
\left \{
\begin{array}{ll}
X_1
&= \frac{ca'}{c-a} \log_2 a' + \frac{ac}{c-a} \log_2 a - \frac{ac'}{c-a} \log_2 c' - \frac{ac}{c-a} \log_2 c \\
X_2
&= \frac{a'c'}{c-a} \log_2 c' + \frac{a'c}{c-a} \log_2 c - \frac{a'c'}{c-a} \log_2 a' - \frac{ac'}{c-a} \log_2 a.
\end{array}
\right .
\end{equation}
Substituting these values in \eqref{eq: C in terms of the Xj} yields
\begin{align}
C
&=\log_2 \left [ \exp_2 \left (\frac{ca'}{c-a} \log_2 a' + \frac{ac}{c-a} \log_2 a - \frac{ac'}{c-a} \log_2 c' - \frac{ac}{c-a} \log_2 c \right ) \right . \\
&\phantom{shsdofisdfd} \left . + \exp_2 \left ( \frac{a'c'}{c-a} \log_2 c' + \frac{a'c}{c-a} \log_2 c - \frac{a'c'}{c-a} \log_2 a' - \frac{ac'}{c-a} \log_2 a \right ) \right ]
\end{align}
hence
\begin{equation}
\label{eq: C as a function of a and c}
C=\log_2 \left [ \left ( \frac{a'^{a'c}a^{ac}}{c'^{ac'}c^{ac}} \right )^{\frac{1}{c-a}}
+ \left ( \frac{c'^{a'c'}c^{a'c}}{a'^{a'c'}a^{ac'}} \right)^{\frac{1}{c-a}}  \right ].
\end{equation}
This formula is essentially equation (5) in \cite{Silverman}, given the substitution $H(x) = -x \log_2 x - (1-x) \log_2 (1-x)$, and after transforming the right hand side into a symmetric form.

It remains to compute the optimal $p_i$ and to prove that they are feasible.
Denoting the inverse of
$Q = \left (
\begin{array}{cc}
a'	& a \\
c'	& c
\end{array}
\right )
$
by $F$ as in the previous section, we have
\begin{equation}
F=
\frac{1}{c-a}
\left (
\begin{array}{cc}
c	& -a \\
-c'	& a'
\end{array}
\right ).
\end{equation}
For a general square, nonsingular $Q$ with inverse $F$, we have $Q^T (p_1, \ldots, p_n)^T = (r_1, \ldots, r_m)^T$, hence $(p_1, \ldots, p_n)^T = F^T (r_1, \ldots, r_m)^T$.
In the present situation, this means that
\begin{equation}
\left (
\begin{array}{c}
p_1 \\
p_2
\end{array}
\right )
=
\frac{1}{c-a}
\left (
\begin{array}{cc}
c	& -c' \\
-a	& a'
\end{array}
\right ) 
\left (
\begin{array}{c}
r_1 \\
r_2
\end{array}
\right ).
\end{equation}
Combining this last relation with \eqref{eq: rj in terms of Xj and C} and \eqref{eq: C in terms of the Xj} gives
\begin{equation}
p_1=\frac{1}{c-a} \frac{c \exp_2 (X_1) - c' \exp_2 (X_2)}{\exp_2 (X_1) + \exp_2 (X_2)}
\end{equation}
and
\begin{equation}
p_2=\frac{1}{c-a} \frac{-a \exp_2 (X_1) + a' \exp_2(X_2)}{\exp_2 (X_1) + \exp_2 (X_2)},
\end{equation}
from which it follows that $p_1 + p_2 = 1$.
To show the feasibility of $\vec{p}$ it will now suffice to show that $p_1$ and $p_2$ are both non-negative.
In fact, assuming that $0 < a < c < 1$, we will have
$p_1 \geq 0$ iff
\begin{equation}
c \exp_2(X_1) - c' \exp_2(X_2) \geq 0
\end{equation}
iff
\begin{equation}
\frac{c}{c'} \geq \frac{\exp_2(X_2)}{\exp_2(X_1)} = \exp_2(X_2-X_1)
\end{equation}
iff
\begin{equation}
\log_2 c - \log_2 c' \geq X_2 - X_1
\end{equation}
iff
\begin{align}
\log_2 c - \log_2 c'
&\geq \frac{1}{c-a}
\left [
(-ac' - ac)\log_2 a + (a'c+ac)\log_2 c \right . \\
& \phantom{ougaasfrg} \left .+ (-a'c'-ca')\log_2 a' + (a'c'+ac')\log_2 c'
\right ]
\end{align}
by \eqref{eq: solution for X1 and X2},
iff
\begin{equation}
\log_2 c - \log_2 c' \geq \frac{1}{c-a}
\left (
-a \log_2 a + c \log_2 c - a' \log_2 a' + c' \log_2 c'
\right )
\end{equation}
iff
\begin{equation}
(c-a)(\log_2 c - \log_2 c') \geq
-a \log_2 a + c \log_2 c - a' \log_2 a' + c' \log_2 c'
\end{equation}
iff
\begin{equation}
a' \log_2 a' - (c-a+c') \log_2 c' \geq (c-(c-a)) \log_2 c - a \log_2 a
\end{equation}
iff
\begin{equation}
a'(\log_2 a' - \log_2 c') \geq a(\log_2 c - \log_2 a).
\end{equation}
However, considering the graphs of $y=\log_2 x$ and $y = \exp_2 x$, we have
\begin{equation}
a'(\log_2 a' - \log_2 c') \geq \int_{\log_2 c'}^{\log_2 a'} \exp_2 x \, dx = \frac{1}{\ln 2}(a'-c')=\frac{c-a}{\ln 2},
\end{equation}
and similarly,
\begin{equation}
a(\log_2 c - \log_2 a) \leq \int_{\log_2 a}^{\log_2 c} \exp_2 x \, dx = \frac{c-a}{\ln 2}.
\end{equation}
It follows that 
\begin{equation}
a'(\log_2 a' - \log_2 c') \geq a(\log_2 c - \log_2 a),
\end{equation}
and therefore that $p_1 \geq 0$, as required.
An exactly similar method shows that $p_2 \geq 0$, as is also required.

\section{Acknowledgements}
The author would like to thank his colleagues Valeriu Beiu, Leonard D\u{a}u\c{s} and Philippe Poulin for several helpful comments which improved this article.


\begin{thebibliography}{s}
\bibitem[M52]{Muroga}
Saburo Muroga, On the Capacity of a Discrete Channel. I, Journal of the Physical Society of Japan, Vol. 8, no. 4, July-August 1953, pp. 484-494.

\bibitem[S49]{Shannon}
The Mathematical Theory of Communication, Claude E. Shannon and Warren Weaver, University of Illinois Press, December 1949.

\bibitem[S55]{Silverman}
On Binary Channels and their Cascades, Richard A. Silverman, IRE Transactions of the Professional Group on Information Theory, Vol. IT-1, no. 3, December 1955.

\bibitem[MS561]{Moore_and_Shannon}
E. F. Moore and C. E. Shannon, Reliable circuits using less reliable relays, Journal of the Franklin Institute, Vol. 262, no. 3, Sep 1956, pp. 191-208.
\end{thebibliography}
\end{document}